\crefname{algocf}{alg.}{algs.}
\Crefname{algocf}{Algorithm}{Algorithms}
\DeclareMathOperator*{\argmin}{arg\,min}
\begin{document}

\title{Maximum In-Support Return Modeling for Dynamic Recommendation with Language Model Prior}

\author{Xiaocong Chen}
\affiliation{%
  \institution{Data 61, CSIRO}
  \city{Eveleigh}
  \country{Australia}
}
\email{xiaocong.chen@data61.csiro.au}
\author{Siyu Wang}
\affiliation{%
  \institution{Macquarie University}
  \city{Sydney}
  \country{Australia}}
\email{siyu.wang@mq.edu.au}

\author{Lina Yao}
\affiliation{%
  \institution{Data 61, CSIRO}
  \city{Eveleigh}
  \country{Australia}}
\affiliation{
  \institution{The University of New South Wales}
  \city{Sydney}
  \country{Australia}}
\email{lina.yao@data61.csiro.au}


\begin{abstract}
Reinforcement Learning-based recommender systems (RLRS) offer an effective way to handle sequential recommendation tasks but often face difficulties in real-world settings, where user feedback data can be sub-optimal or sparse. In this paper, we introduce MDT4Rec, an offline RLRS framework that builds on the Decision Transformer (DT) to address two major challenges: learning from sub-optimal histories and representing complex user-item interactions. First, MDT4Rec shifts the trajectory stitching procedure from the training phase to action inference, allowing the system to shorten its historical context when necessary and thereby ignore negative or unsuccessful past experiences. Second, MDT4Rec initializes DT with a pre-trained large language model (LLM) for knowledge transfer, replaces linear embedding layers with Multi-Layer Perceptrons (MLPs) for more flexible representations, and employs Low-Rank Adaptation (LoRA) to efficiently fine-tune only a small subset of parameters. We evaluate MDT4Rec on five public datasets and in an online simulation environment, demonstrating that it outperforms existing methods. 
\end{abstract}

\begin{CCSXML}
<ccs2012>
   <concept>
       <concept_id>10002951.10003317.10003347.10003350</concept_id>
       <concept_desc>Information systems~Recommender systems</concept_desc>
       <concept_significance>500</concept_significance>
       </concept>
   <concept>
       <concept_id>10010147.10010257.10010258.10010261</concept_id>
       <concept_desc>Computing methodologies~Reinforcement learning</concept_desc>
       <concept_significance>500</concept_significance>
       </concept>
 </ccs2012>
\end{CCSXML}

\ccsdesc[500]{Information systems~Recommender systems}
\ccsdesc[500]{Computing methodologies~Reinforcement learning}
\keywords{Reinforcement Learning, Recommender Systems, Large Language Models}

\maketitle

\section{Introduction}
Reinforcement Learning-based Recommender Systems (RLRS) have emerged as a powerful approach for handling sequential and session-based recommendation tasks, where the ability to adaptively learn user preferences over time is crucial~\cite{chen2023deep,chen2022generative}. By modeling the recommendation process as a sequential decision-making problem, RLRS can dynamically adjust recommendations based on user feedback, leading to personalized experiences and improved user satisfaction. These systems leverage the strengths of reinforcement learning to optimize long-term user engagement, making them particularly suitable for scenarios where user behavior evolves over time.

However, RLRS face several limitations, particularly in real-world applications where deploying and training RL models in an online environment can be impractical or costly~\cite{chen2022locality}. To address these challenges, offline RL has emerged as a promising approach~\cite{levine2020offline}, giving rise to offline RL-based recommender systems (offline RLRS)\cite{chen2023opportunities}. Offline RL, also known as data-driven RL\cite{levine2020offline}, leverages extensive pre-collected datasets of user interactions for the preliminary training of agents, eliminating the need for continuous online learning and enabling the development of RLRS agents using existing data.

Recent studies have explored the application of offline RL to enhance RLRS performance~\cite{Wang_2023,zhao2023user,chen2023opportunities}. For instance, CDT4Rec~\cite{Wang_2023} integrates the Decision Transformer (DT)\cite{chen2021decision} as its core structure and employs a causal mechanism to estimate rewards effectively. Similarly, DT4Rec\cite{zhao2023user} uses DT with a focus on capturing user retention, incorporating an efficient reward prompting method to enhance the learning process. These approaches underscore the potential of offline RL to leverage diverse and rich offline data sources, offering a practical solution to the limitations of online RL deployment. However, despite their promise, offline RLRS continue to face challenges related to learning from sub-optimal data sources and representing the complexity of user-item interactions.

In the context of learning from sub-optimal data sources,~\cite{chen2024maximum} introduces a trajectory stitching strategy, where a sub-optimal trajectory is enhanced by integrating segments from a similar but optimal trajectory, thereby generating a near-optimal sequence. While this approach has the potential to improve theoretical performance, it may not accurately capture genuine user intent in practical applications, particularly when parts of the interaction history are artificially constructed. Such synthetic transitions can introduce inconsistencies between the model’s predictions and actual user behavior. To mitigate this issue, we argue that the model should disregard unsuccessful or ineffective past experiences by dynamically adjusting the length of the historical input. By shortening the interaction history and excluding prior failures, the sequence generation model is better positioned to select actions that are more likely to produce favorable outcomes.

Another challenge in offline RLRS is representing the complexity of user-item interactions, as no existing RLRS work attempts to address it~\cite{chen2023deep}. But, one of the well-known applications of transformers - Large Language Models (LLMs) has gained significant momentum in the field of traditional RS due to their powerful knowledge-transfer and representation capabilities~\cite{du2024enhancing, he2023large, zhang2023user,kim2024large}. In sequential recommendation scenarios, LLMs have shown impressive performance in both zero-shot settings~\cite{bao2023tallrec}, pre-trained models~\cite{du2024enhancing} and user-item embedding~\cite{hu2024enhancing}, often achieving results comparable to or even exceeding those of conventional recommendation algorithms.

Given the success of LLMs, there is a natural motivation to integrate their capabilities into transformer-based sequential decision-making tasks such as offline RLRS. This integration has been explored in several recent studies on offline RL~\cite{brohan2022rt,li2022pre,reid2022can}. For example,\citet{li2022pre} proposed encoding environment states using LLMs and learning policies from the decoded states. However, their approach is limited to language-only state descriptions, restricting its applicability to more complex tasks like motion control. To address this,\citet{reid2022can} initialized DT with a pre-trained LM, allowing it to process low-level agent states and actions directly, making it more suitable for tasks such as locomotion. While this represents progress, their method does not fully exploit the capabilities of LLMs, achieving performance similar to standard DT methods and underperforming compared to offline RL approaches~\cite{shi2023unleashing}. Similarly, the potential of LLMs in RLRS remains insufficiently explored, leaving open questions about how to better utilize their capabilities to enhance RLRS performance.

In this study, we present a novel method to overcome those aforementioned two challenges, named Maximum return modeling enhanced Decision Transformer with language model prior (MDT4Rec). MDT4Rec builds upon the DT structure but with several key novel innovations: (i) Unlike EDT4Rec, we incorporate the stitching process into the action inference stage rather than the training stage. Additionally, during action inference, we introduce an optimal history length search mechanism that allows the model to disregard negative or unsuccessful past experiences; (ii) MDT4Rec uses pre-trained LLMs to initialize the weights of the DT, providing rich prior knowledge to boost the performance; (iii) To improve representation learning, we replace the standard linear embedding projections with Multi-Layer Perceptrons (MLPs), enabling a more nuanced capture of user-item interactions; (iv) To enhance training efficiency, we employ Low-Rank Adaptation (LoRA), a fine-tuning method that efficiently updates the model while maintaining the benefits of the pre-trained LLM.

Extensive experiments conducted on various real-world datasets demonstrate the effectiveness of our approach. The LLM-enhanced framework not only achieves state-of-the-art performance in dynamic recommendation tasks but also provides a new direction for future offline RLRS research, highlighting the potential of combining LLMs with offline RLRS for improved recommendation quality and system efficiency.

Our contributions can be summarized as:

\begin{itemize}
    \item We introduce MDT4rec, a new framework that integrates a large language model (LLM) as a prior and employs a specialized action inference stage stitching approach, thereby improving the offline RLRS framework.
    \item To tackle the issue of absent stitching trajectories, we develop a method that allows the model to ignore negative or unsuccessful past experiences by adjusting the trajectory length.
    \item Unlike the traditional approach that uses text-based prompt to incorporate the LLM into RS, this work incorporates an LLM into offline RLRS without instruction prompting.
    \item We evaluate the effectiveness of MDT4rec through extensive experiments on five public datasets and within an online simulation environment, showing that it outperforms existing methods.
\end{itemize}

\section{Background}
\subsection{Problem Formulation}
The recommendation problem can be conceptualized as an agent striving to achieve a specific goal through learning from user interactions, such as item recommendations and subsequent feedback. This scenario is aptly formulated as a RL problem, where the agent is trained to interact within an environment, typically described as a Markov Decision Process (MDP)~\cite{chen2023deep}.
The components of an MDP are represented as a tuple $(\mathcal{S},\mathcal{A}, \mathcal{P}, \mathcal{R}, \gamma)$ where:
\begin{itemize}
\item \textbf{State} $\mathcal{S}$: The state space, with $s_t \in \mathcal{S}$ representing the state at timestep $t$, typically containing users' previous interests, demographic information, etc.
\item \textbf{Action} $\mathcal{A}$: The action space, where $a_t \in \mathcal{A}(s_t)$. $a_t$ is the action taken given a state $s_t$, usually referring to recommended items.
\item \textbf{Transition Probability} $\mathcal{P}$: Denoted as $p(s_{t+1}|s_t, a_t) \in \mathcal{P}$, this is the probability of transitioning from $s_t$ to $s_{t+1}$ when action $a_t$ is taken.
\item \textbf{Reward} $\mathcal{R}$: $\mathcal{S} \times \mathcal{A} \to \mathbb{R}$ is the reward function, where $r(s, a)$ indicates the reward received for taking action $a$ in state $s$.
\item \textbf{Discount-rate Parameter} $\gamma$: $\gamma \in [0, 1]$ is the discount factor used to balance the value of future rewards against immediate rewards.
\end{itemize}

\emph{Offline} reinforcement learning diverges from traditional RL by exclusively utilizing pre-collected data for training, without the necessity for further online interaction~\cite{levine2020offline}. In offline RL, the agent is trained to maximize total rewards based on a static dataset of transitions $\mathcal{D}$ for learning. This dataset could comprise previously collected data or human demonstrations. Consequently, the agent in offline RL lacks the capability to explore and interact with the environment for additional data collection. The dataset $\mathcal{D}$ in an offline RL-based RS can be formally described as ${(s_t^u, a_t^u, s_{t+1}^u, r_t^u)}$, adhering to the MDP framework $(\mathcal{S},\mathcal{A}, \mathcal{P}, \mathcal{R}, \gamma)$. For each user $u$ at timestep $t$, the dataset includes the current state $s_t^u \in \mathcal{S}$, the items recommended by the agent via action $a_t^u$, the subsequent state $s_{t+1}^u$, and the user feedback $r_t^u$.
\subsection{Decision Transformer}
The Decision Transformer (DT) architecture, introduced by~\cite{chen2021decision}, reformulates offline reinforcement learning as a conditional sequence modeling task. Unlike traditional RL approaches that rely on estimating value functions or optimizing policy gradients, DT directly models the decision-making process by predicting actions based on historical interaction sequences. At each time step, the model conditions on a fixed-length window of previous states, actions, and rewards—along with the target return-to-go (RTG) to predict the next action to take.

The model is built on a Transformer backbone, originally developed for natural language processing, which uses stacked self-attention layers and residual connections to capture long-range dependencies across sequences. This structure enables DT to process complex temporal relationships within trajectories, making it well-suited for learning from offline datasets where exploration is not possible.

During training, the input is structured as a sequence of alternating triplets in the form:
\begin{align*}
    \tau = \{\cdots, s_t, R_t, a_t, \cdots\},
\end{align*}
where \( s_t \) is the state at time \( t \), \( a_t \) is the corresponding action, and \( R_t \) is the return-to-go—i.e., the expected cumulative reward from time \( t \) onward. The model learns to map these sequences to the next action in the trajectory, effectively imitating high-performing behaviors observed in the offline data.

\section{Methodology}

\begin{figure*}[ht]
    \centering
    \includegraphics[width=\linewidth]{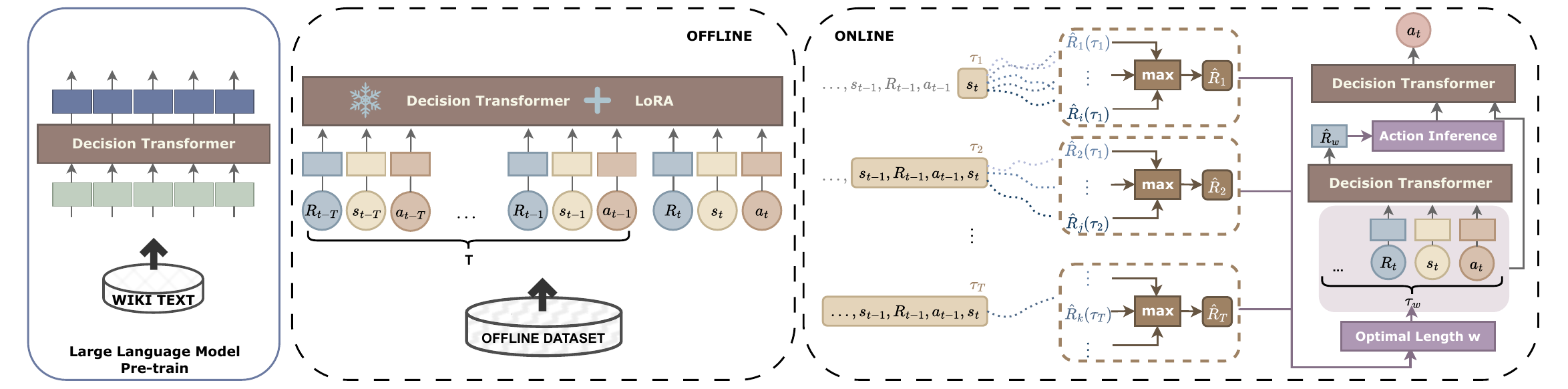}
    \caption{The overall structure of the proposed MDT4Rec}
    \label{fig:mdt}
\end{figure*}
\subsection{Maximum In-Support Return Modeling}
In contrast to EDT4Rec~\cite{chen2024maximum}, we present an alternative strategy for trajectory stitching. Rather than emphasizing the training phase, we shift this stitching process to the action inference stage. Consider a simplified scenario in which the dataset $\mathcal{D}$ contains only two trajectories, $\mathcal{D} = { (s_{t-1}^a,s_t,s_{t+1}^a),(s_{t-1}^b,s_t,s_{t+1}^b)}$. In this setup, a sequence model trained on such data will typically forecast subsequent states in alignment with the original trajectories.

To address this issue, we propose a method that enables the model to begin at \( s_{t-1}^b \) and end at \( s_{t+1}^a \) by flexibly selecting the length of historical context. Central to this approach is a maximal return estimator \( \hat{R} \), designed to assess the highest possible return over all trajectories in the dataset. This estimator guides the selection of history length that yields the maximum estimated return. For instance, if the starting point is \( s_{t-1}^b \), the model chooses to keep only \( (s_t) \) when reaching \( s_t \), since \( \hat{R}(s_t) > \hat{R}(s_{t-1}, s_t) \). In contrast, starting from \( s_{t-1}^a \), it retains the sequence \( (s_{t-1}^a, s_t) \) at \( s_t \), given that \( \hat{R}(s_{t-1}^a, s_t) \geq \hat{R}(s_t) \). This example illustrates that the most beneficial history length depends on the trajectory’s quality and may range from a single step to the maximum permitted context.

To determine the most suitable history length in a general setting, we formulate the following optimization objective:
\begin{align*}
    \arg\max_T \max_{\tau_T \in \mathcal{D}} R_t(\tau_T),
\end{align*}
where \( \tau_T \) denotes a trajectory segment of length \( T \). The structure of \( \tau_T \) is defined as:
\begin{align*}
    \tau_T = (s_{t-T+1}, R_{t-T+1}, a_{t-T+1}, \ldots, s_{t-1}, R_{t-1}, a_{t-1}, s_t, R_t, a_t).
\end{align*}

We adopt the same training framework as used in the DT, with a key modification: our approach is focused on estimating the upper bound of returns achievable given histories of different lengths. To approximate the inner maximization 
\[
\max_{\tau_T \in \mathcal{D}} R_t(\tau_T),
\]
we apply expectile regression, a method commonly used in reinforcement learning, notably in Implicit Q-Learning (IQL)~\cite{kostrikov2021offline}. Expectile regression provides a way to estimate the upper tail of the return distribution, which is particularly useful when data coverage is sparse. The \( \alpha \)-expectile of a random variable \( X \), with \( \alpha \in (0,1) \), is the solution to the following asymmetric least squares minimization:

\begin{align}
\arg\min_{m_\alpha} \mathbb{E}_{x \in X} \big[ L_2^\alpha(x - m_\alpha) \big],
\end{align}
where the asymmetric squared loss is given by \( L_2^\alpha(u) = |\alpha - \mathds{1}(u < 0)| u^2 \).

By employing expectile regression, we can approximate the term 
$\max_{\tau_T\in\mathcal{D}}R_t(\tau_T)$: 

\begin{align} 
    \hat{R}_t^T = \max_{\tau_T\in\mathcal{D}}R_t(\tau_T) \approx \argmin_{\hat{R}_t(\tau_T)} \mathbb{E}_{\tau_T\in \mathcal{D}} \bigl[L_2^\alpha \bigl(\hat{R}_t(\tau_T)-R_t\bigr)\bigr], \label{eq:rhat} 
\end{align} 

where we estimate $\hat{R}_t$ by minimizing the empirical loss of \Cref{eq:rhat} with a sufficiently high $\alpha$ value (we choose $\alpha=0.99$ in all experiments). The following theorem supports this choice:
\begin{theorem}\label{thm:expectile_convergence}
Let $ \mathbf{S}_t \coloneqq \bigl[ s_{t-T+1}, R_{t-T+1}, a_{t-T+1} \,,\, s_t \bigr].$
For $\alpha \in (0,1)$, define $\mathbf{g}^\alpha(\mathbf{S}_t) \;=\; \pi^*_{\theta}(\mathbf{S}_t),$
where $\pi^*_{\theta} \;=\; \arg \min_{m} \mathcal{L}_2^\alpha(m)$,
and
\[
\mathcal{L}_2^\alpha(m) 
\;=\; 
\mathbb{E}\Bigl[\Bigl|\alpha - \mathds{1}\bigl(R_t < m\bigr)\Bigr|\bigl(R_t - m\bigr)^2\Bigr].
\]
Assume $R^{\max}_t = \max_{\tau_T\sim \mathcal{D}} R_t(\tau_T)$ is finite. Then
\[
\lim_{\alpha \to 1} \mathbf{g}^\alpha(\mathbf{S}_t)
\;=\;
R^{\max}_t,
\]
where $R^{\max}_t$ is the maximum returns-to-go over the offline dataset.
\end{theorem}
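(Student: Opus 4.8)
The plan is to identify $\mathbf{g}^\alpha(\mathbf{S}_t)$ with the unique stationary point of the strictly convex loss $\mathcal{L}_2^\alpha$, and then push the stationarity condition to the limit $\alpha \to 1$ by a squeeze. First I would record that, because $\mathcal{D}$ is a finite dataset, $R_t$ is a bounded random variable whose maximal value $R^{\max}_t$ is its essential supremum and is attained with strictly positive probability; consequently every expectation appearing below is finite. Since $u \mapsto |\alpha - \mathds{1}(u<0)|\,u^2$ is convex and continuously differentiable (its derivative is $2\,|\alpha - \mathds{1}(u<0)|\,u$, the kink at $u=0$ being removable), the map $m \mapsto \mathcal{L}_2^\alpha(m)$ is strictly convex and coercive for $\alpha \in (0,1)$, so $m_\alpha := \pi^*_\theta(\mathbf{S}_t)$ exists, is unique, and satisfies the first-order condition
\[
\alpha\,\mathbb{E}\bigl[(R_t - m_\alpha)_+\bigr] \;=\; (1-\alpha)\,\mathbb{E}\bigl[(m_\alpha - R_t)_+\bigr],
\qquad (x)_+ := \max(x,0).
\]

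Next I would establish the two-sided bound $R^{\max}_t - \epsilon < m_\alpha \le R^{\max}_t$ for $\alpha$ sufficiently close to $1$, where $\epsilon > 0$ is arbitrary. The upper bound is immediate: for $m > R^{\max}_t$ the indicator equals $1$ almost surely, so $\mathcal{L}_2^\alpha(m) = (1-\alpha)\,\mathbb{E}[(R_t-m)^2]$, which is strictly increasing there (its derivative is $2(1-\alpha)(m - \mathbb{E}[R_t]) > 0$), hence the minimizer cannot exceed $R^{\max}_t$. For the lower bound, suppose $m_\alpha \le R^{\max}_t - \epsilon$; since $c \mapsto (x - c)_+$ is nonincreasing and $R^{\max}_t$ is the essential supremum, $\mathbb{E}[(R_t - m_\alpha)_+] \ge \mathbb{E}[(R_t - (R^{\max}_t - \epsilon))_+] \ge \tfrac{\epsilon}{2}\,\mathbb{P}\bigl(R_t \ge R^{\max}_t - \tfrac{\epsilon}{2}\bigr) =: c_\epsilon > 0$, while the right-hand side of the first-order condition is at most $(1-\alpha)\,\mathbb{E}[(m_\alpha - R_t)_+] \le (1-\alpha)\,\mathbb{E}[R^{\max}_t - R_t] =: (1-\alpha)\,C$. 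The first-order condition then forces $\alpha\,c_\epsilon \le (1-\alpha)\,C$, which fails for all $\alpha$ close enough to $1$; hence $m_\alpha > R^{\max}_t - \epsilon$ for such $\alpha$. Letting $\epsilon \downarrow 0$ gives $\lim_{\alpha \to 1} m_\alpha = R^{\max}_t$, i.e.\ $\lim_{\alpha\to1}\mathbf{g}^\alpha(\mathbf{S}_t) = R^{\max}_t$.

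I expect the main obstacle to be the derivation of the first-order condition itself — justifying the interchange of differentiation and expectation and handling the indicator-weighted square cleanly; this turns out to be routine because the per-sample loss is genuinely $C^1$ in $m$, but it should be spelled out carefully, together with the observation that finiteness of $\mathcal{D}$ keeps every expectation finite so the estimate never compares against $\infty$. Everything else is a short inequality once the positive mass of $R_t$ at $R^{\max}_t$ is in hand. As a variant that avoids the first-order condition entirely, one may invoke the classical monotonicity of the $\alpha$-expectile in $\alpha$ together with the upper bound $m_\alpha \le R^{\max}_t$ to conclude $m_\alpha \uparrow L \le R^{\max}_t$, and then rule out $L < R^{\max}_t$ by the same positive-mass estimate; I would keep whichever of the two arguments is more self-contained in the final write-up.
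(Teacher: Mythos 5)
Your proposal is correct, and it takes a genuinely different — and more rigorous — route than the paper. The paper's proof has three steps: (i) invoke, without proof, the known monotonicity of the $\alpha$-expectile in $\alpha$; (ii) show $\mathbf{g}^\alpha \le R^{\max}_t$ by the observation that for $m > R^{\max}_t$ the loss decreases as $m$ moves down toward $R^{\max}_t$; and (iii) conclude that the monotone bounded family converges to some $\ell \le R^{\max}_t$ and then identify $\ell = R^{\max}_t$ only by the informal remark that ``increasing $\alpha$ places more emphasis on high returns.'' Step (iii) is the weak point: no quantitative estimate is given to rule out $\ell < R^{\max}_t$. Your argument closes exactly that gap. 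By working from the first-order condition $\alpha\,\mathbb{E}[(R_t-m_\alpha)_+] = (1-\alpha)\,\mathbb{E}[(m_\alpha-R_t)_+]$ (which is legitimate since the per-sample loss is $C^1$ and the population loss is strictly convex and coercive), you get the same upper bound as the paper's step (ii) and then a lower bound $m_\alpha > R^{\max}_t - \epsilon$ for $\alpha$ near $1$, using that the maximum is attained with positive probability over the finite dataset so that $\mathbb{E}[(R_t - (R^{\max}_t-\epsilon))_+]$ is bounded below by a positive constant $c_\epsilon$ while the other side is $O(1-\alpha)$. This squeeze needs neither the monotonicity property nor any limiting hand-wave, and it yields an explicit rate ($\alpha c_\epsilon \le (1-\alpha)C$ fails for $\alpha$ close to $1$). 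The one hypothesis you should state explicitly is that the expectation in $\mathcal{L}_2^\alpha$ is taken over the empirical distribution of the finite dataset $\mathcal{D}$, so that $\mathbb{P}(R_t = R^{\max}_t) > 0$; this is implicit in the theorem's use of $\max_{\tau_T \sim \mathcal{D}}$ but is essential to your positive-mass estimate (for a general distribution whose essential supremum is not atomic the same conclusion holds but requires the $\mathbb{P}(R_t \ge R^{\max}_t - \epsilon/2) > 0$ version you already wrote). Your closing variant — monotonicity plus upper bound plus the positive-mass estimate to rule out $\ell < R^{\max}_t$ — is essentially the paper's proof with its missing step repaired, so either write-up would be an improvement on what the paper prints.
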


\begin{proof}
It is a known property of $\alpha$-expectiles that $\mathbf{g}^\alpha$ is non-decreasing in $\alpha$.  
In more precise terms, if $0 < \alpha_1 < \alpha_2 < 1$, then $ \mathbf{g}^{\alpha_1} \;\le\; \mathbf{g}^{\alpha_2}.$
This arises from the structure of the piecewise weighting $\Bigl|\alpha - \mathds{1}\bigl(R_t < m\bigr)\Bigr|$,
which moves the balance point of the squared-error function to the right as $\alpha$ increases.

We claim $\mathbf{g}^\alpha \leq R^{\max}_t$ for all $\alpha \in (0,1)$.  
Suppose, for contradiction, that there exists an $\alpha_3$ with $\mathbf{g}^{\alpha_3} > R^{\max}_t$.  
By definition of $R^{\max}_t$, all returns in the dataset are strictly less than $\mathbf{g}^{\alpha_3}$.  

Consider the loss
\[
\mathcal{L}_2^{\alpha_3}(m)
\;=\;
\mathbb{E}\Bigl[\Bigl|\alpha_3 - \mathds{1}\bigl(R_t < m\bigr)\Bigr|\bigl(R_t - m\bigr)^2\Bigr].
\]
If $m = \mathbf{g}^{\alpha_3} > R^{\max}_t$, then for every return $R_t$ in the dataset, we have $R_t < m$.  
This means $\bigl(R_t - m\bigr)^2$ is strictly positive and grows larger as $m$ moves further away from $R_t$.  
Thus, decreasing $m$ toward $R^{\max}_t$ would yield a smaller squared difference and hence a smaller loss,  
contradicting the assumption that $m = \mathbf{g}^{\alpha_3}$ minimizes $\mathcal{L}_2^{\alpha_3}(m)$.  
Therefore, we must have $\mathbf{g}^{\alpha_3} \le R^{\max}_t$.  
Since $\alpha_3$ was arbitrary, it follows that $\mathbf{g}^\alpha \le R^{\max}_t$ for all $\alpha \in (0,1)$.

We have shown that $\{\mathbf{g}^\alpha\}_{\alpha \in (0,1)}$ is a non-decreasing set, and it is bounded above by $R^{\max}_t$.  
Hence, $\mathbf{g}^\alpha$ converges to some limit $\ell \le R^{\max}_t$ as $\alpha \to 1$.  
To see that $\ell = R^{\max}_t$, note that increasing $\alpha$ places more emphasis on high returns $R_t$ in the dataset,  
since the penalty for being above large values diminishes relative to the penalty for being below them.  
In the limit as $\alpha \to 1$, the minimizer of $\mathcal{L}_2^\alpha(m)$ must coincide with (or approach arbitrarily closely)  
the largest returns present in the dataset, that is $R^{\max}_t$.  Therefore, $
\lim_{\alpha \to 1} \mathbf{g}^\alpha(\mathbf{S}_t) \;=\; R^{\max}_t.$

\end{proof}

In summary, this theorem indicates that minimizing $\mathcal{L}_2^\alpha$ drives the model toward predicting the maximum returns-to-go when $\alpha$ approaches 1. This aligns with the objective of maximizing returns in RL (see \Cref{eq:rhat}). Consequently, we set $\alpha=0.99$ to avoid extensive parameter tuning.

\subsection{Action Inference }
During the action inference phase at test time, we begin by estimating the maximum achievable return, \( \hat{R}_i \), for each possible history length \( i \). Actions are then determined based on a truncated version of the observed trajectory, where truncation occurs at the history length corresponding to the highest estimated return \( \hat{R}_i \), as shown in ~\Cref{fig:mdt}.  

To identify the history length \( i \) that maximizes \( \hat{R}_i \), we employ a search strategy outlined in Algorithm~1. A brute-force search over all possible history lengths from 1 to \( T \) can be computationally inefficient, potentially leading to slow inference. To mitigate this, we incorporate a step size parameter \( \delta \) to accelerate the search procedure. This modification reduces inference time by a factor of \( \delta \), and in practice, also leads to better policy performance. We perform an ablation study to examine how different values of \( \delta \) influence the results. Following the hyperparameter evaluation in~\Cref{sec:hyper}, we use \( \delta = 2 \) across all experiments.

To sample from the expert return distribution $P(R_t,\ldots\mid \text{expert}^t)$, we adopt a similar approach to~\cite{lee2022multi} by applying Bayes’ rule: 
\begin{align} 
    P(R_t \mid \text{expert}^t, \ldots) \propto \exp(\kappa R_t) P(R_t), \label{eq:bays} 
\end{align} 
where $\kappa$ is the inverse temperature, set to $10$ as recommended in~\cite{lee2022multi}.

\begin{algorithm}
\caption{MDT4Rec optimal history length search}
\begin{algorithmic}[1]
\Require A query sequence $\tau = \langle \ldots, s_{t-1}, R_{t-1}, a_{t-1}, s_t \rangle$ and MDT4Rec model $\theta$
\For{$T = 1, 1+\delta, 1+2\delta, \ldots, T$}
    \State Obtain $\hat{R}_t(\tau_T)$ with truncated sequence 
    $\tau_T = \langle s_{t-T+1}, R_{t-T+1}, a_{t-T+1}, \ldots, s_t \rangle$ and $\theta$
\EndFor
\State Compute $P(R_t)$ with $\theta$ for $\tau_T$ that has the highest $\hat{R}_t$ and then estimate $P(R_t \mid \text{expert}^t, \ldots)$ with~\Cref{eq:bays}.
\end{algorithmic}
\end{algorithm}

When the input history is shorter, the prediction model tends to produce outputs with greater variability, which is often seen as a drawback in forecasting tasks. However, this variability can enable the sequence prediction model to explore different trajectories and potentially uncover better ones. In contrast, if the current trajectory is already close to optimal, using the longest available history helps maintain stability and reliability in predictions. This suggests a dependency between trajectory quality and the amount of historical data incorporated. Building on this insight, we propose an adaptive approach that dynamically adjusts history length to improve predictive accuracy.

\subsection{Large Language Model as Prior}
Now we present how we use the LLM as the prior to enhance the offline RLRS capability. 
The initial step involves acquiring pre-trained language models (LMs). Due to its widespread use and computational efficiency, we employ the GPT-2 architecture~\cite{radford2019language}, utilizing its readily available pre-trained weights from Hugging Face. We find that GPT-2 is strong enough to perform as the prior of the proposed MDT4Rec. The experiment that is used to support our claim is presented in the~\Cref{sec:differentllm}.

Following the method outlined in previous work~\cite{reid2022can}, this pre-training is done using the WikiText dataset~\cite{merity2016pointer}, with the standard next-token prediction objective:
\[
    \mathcal{L}_{language} = \sum_{i=1}^{s-1} -\log (T(w_{i+1}|w_1,\dots,w_i)),
\]
where \(w_i\) is the \(i\)-th token in the sequence, and \(T\) denotes the model's predicted probability distribution for the next token. We experiment with three model variations: 

\textbf{Using multi-layer perceptrons for embeddings}. In pre-trained LMs, the input is transformed into latent vectors, which are then decoded through linear projections. However, for optimal performance in offline RL, we find that replacing these linear projections with multi-layer perceptrons (MLPs) is essential. This adjustment effectively mitigates the domain mismatch between language tasks and RL tasks. Our comprehensive ablation studies, discussed in~\Cref{sec:aba}, demonstrate the critical role of this non-linear module.

\textbf{Frozen weights and low-rank adaptation}. For efficient parameter updates, we utilize LoRA~\cite{hu2021lora}, which confines gradient updates to a lower-dimensional subspace by decomposing the weight matrix \(W \in \mathbb{R}^{d \times k}\) as \(W_0 + \Delta W = W_0 + BA\), where \(B \in \mathbb{R}^{d \times r}\) and \(A \in \mathbb{R}^{r \times k}\), with \(r \ll \min(d, k)\). This method introduces low-rank matrices to the attention weights (Q, K, V), while all other Transformer weights remain fixed.

\textbf{Language prediction as an auxiliary task}. To stabilize training and preserve linguistic knowledge, we simultaneously train the model on language prediction tasks using the WikiText dataset~\cite{merity2016pointer}, akin to the pre-training phase. During language prediction, we temporarily substitute the input and output projections with those of the pre-trained LM. This auxiliary task, similar to~\cite{reid2022can}, empirically helps prevent overfitting which is validated in~\Cref{fig:language}. However, since text data may not be available for all datasets, we treat it as an auxiliary task in our work and leave the capability to process text-based data for future exploration.

\subsection{Training Objective}
The overall training objective consists of three components: DT, language modeling, and maximizing in-support estimation (i.e.,~\Cref{eq:rhat}). In this discussion, we primarily focus on the DT.  
Let \( \theta_e \) and \( \theta_g \) represent the trainable parameters of the reward estimation network \( N_e \) and the action generation network \( N_g \), respectively. Additionally, let \( \theta_s \) and \( \theta_a \) denote all trainable parameters responsible for predicting the state sequence \( s_p^{t-T+1:t} \) and action sequence \( a_p^{t-T+1:t} \).  The reward estimation network \( N_e \), along with the state and action prediction components, are trained by minimizing the squared error between the estimated and actual rewards, leading to the following loss function \( \mathcal{L}_{N_e}(\theta_e, \theta_s, \theta_a) \):
\begin{align*}
\mathbb{E}_{(R,s,a)\sim \tau}\Bigg[\frac{1}{T}\sum_{t=1}^{T} \bigg( r_t - N_e\big(s_p^{t-T+1:t}(\theta_s), a_p^{t-T+1:t}(\theta_a);\theta_e\big) \bigg)^2 \Bigg]
\end{align*}

For the action generation network \( N_g \), the objective is to optimize action selection by minimizing the cross-entropy loss, defined as \( \mathcal{L}_{N_g}(\theta_g, \theta_e, \theta_a) \):

\begin{align*}
    \frac{1}{T}\mathbb{E}_{(R,s,a)\sim \tau} \Bigg[ -\sum_{t=1}^{T} \log N_g\big(\hat{r}^{t-T+1:t}(\theta_e), a_p^{t-T+1:t}(\theta_a);\theta_g \big) \Bigg]
\end{align*}

Thus, the overall training objective for the DT model is formulated as $ \mathcal{L}_{dt} =  \mathcal{L}_{N_e}(\theta_e, \theta_s, \theta_a) + \mathcal{L}_{N_g}(\theta_g, \theta_e, \theta_a)$. Consequently, the overall training objective for MDT4Rec becomes:
\begin{align*}
    \mathcal{L} = \mathcal{L}_{dt} + \lambda\cdot \mathcal{L}_{language} + \mathcal{L}_\text{max},
\end{align*}
where $\lambda$ is set to be in $0.1$ to stabilized the training process. However, whether this term could enhance performance in all cases still requires further investigation.

\section{Experiments}
\begin{table*}[!ht]
\centering
\caption{Performance comparison between selected RL methods trained with a user model on five datasets and online simulation environment. The best results are highlighted in bold. SASRec is used as the state encoder for those offline datasets and GPT2 as the LLM.}
\resizebox{\textwidth}{!}{
\begin{tabular}{c|ccc|ccc|ccc}
\hline

\textbf{Method} & \multicolumn{3}{c|}{\textbf{Coat}} & \multicolumn{3}{c|}{\textbf{MoveLens}} & \multicolumn{3}{c}{\textbf{KuaiRec}} \\
 & $R_{cumu}$ & $R_{avg}$ & Length & $R_{cumu}$ & $R_{avg}$ & Length & $R_{cumu}$ & $R_{avg}$ & Length \\
\hline
DDPG & $15.2764 \pm 7.02$ & $2.1186 \pm 0.97$ & 7.21 & $9.0271 \pm 4.32$ & $2.4721 \pm 1.18$ & 3.65 & $9.4018 \pm 4.24$ & $1.0197 \pm 0.46$ & 9.22 \\
SAC & $80.9823 \pm 24.22$ & $2.9008 \pm 0.78$ & 27.89 & $34.1192 \pm 8.92$ & $3.4624 \pm 1.12$ & 9.85 &  $23.4216 \pm 10.02$&  $0.7473 \pm 0.32$   & 31.34 \\
TD3 & $16.0280 \pm 7.48$ & $2.2832 \pm 1.06$ & 7.02 & $9.7262 \pm 4.66$ & $2.5791 \pm 1.24$ & 3.77 & $8.2458 \pm 3.89$ & $0.9154 \pm 0.43$ & 9.01 \\
DT & $83.4891 \pm 25.88$ & $2.7830 \pm 0.86$ & 30.02 & $33.2188 \pm 9.02$ & $3.4884 \pm 1.05$ & 9.52 & $29.0291 \pm 10.00$ & $0.9080 \pm 0.31$ & 31.98 \\
DT4Rec & $82.1823 \pm 26.22$ & $2.6156 \pm 0.83$ & 31.42 & $32.8728 \pm 8.91$ & $3.3238 \pm 0.91$ & 9.89 & $28.5418 \pm 9.51$ & $0.8798 \pm 0.29$ & 32.44\\
CDT4Rec & $87.8712 \pm 24.54$ & $2.9010 \pm 0.80$ & 30.29 & $35.1284 \pm 9.88$ & $4.0010 \pm 1.13$ & 8.78 & $30.4888 \pm 11.01$ & $1.0061 \pm 0.36$ & 30.22 \\ 
EDT4Rec & $88.1922 \pm 22.92$ & $2.8972 \pm 0.76$ & 30.42 & $35.8312 \pm 10.13$ & $3.9777 \pm 1.14$ & 8.89 & $31.0726 \pm 11.66$ & $1.0397 \pm 0.39$ & 29.88\\
\hline
\textbf{MDT4Rec} & $\mathbf{89.8742 \pm 22.48}$ & $\mathbf{2.9275 \pm 0.73}$ & 30.70 & $\mathbf{38.6401 \pm 10.71}$ & $\mathbf{4.2933 \pm 1.20}$ & 9.00 & $\mathbf{33.0124 \pm 11.90}$ & $\mathbf{1.0726 \pm 0.39}$ & 30.78 \\
\textbf{MDT4Rec-LM} & $88.7680 \pm 22.80$ & $2.9095 \pm 0.75$ & 30.51 & $37.6233 \pm 10.90$ & $3.9855 \pm 1.15$ & 9.44 & $32.0234 \pm 11.70$ & $1.0604 \pm 0.39$ & 30.20 \\
\textbf{MDT4Rec-Max} & $87.9283 \pm 23.24$ & $2.9029 \pm 0.77$ & 30.29 & $36.3421 \pm 10.02$ & $4.0425 \pm 1.11$ & 8.99 & $31.8272 \pm 11.60$ & $1.0501 \pm 0.38$ & 30.31 \\
\hline
\end{tabular}
}

\resizebox{\textwidth}{!}{
\begin{tabular}{c|ccc|ccc|ccc}

\hline

\textbf{Method} & \multicolumn{3}{c|}{\textbf{YahooR3}} & \multicolumn{3}{c|}{\textbf{KuaiRand}} & \multicolumn{3}{c}{\textbf{VirtualTB}} \\
 & $R_{cumu}$ & $R_{avg}$ & Length & $R_{cumu}$ & $R_{avg}$ & Length & $R_{cumu}$ & $R_{avg}$ & Length \\
\hline
DDPG & $7.4421 \pm 3.12$ & $2.9063 \pm 1.22$ & 2.56 & $1.3972 \pm 0.52$ & $0.3288 \pm 0.12$ & 4.25 & $74.6534 \pm 24.22$ & $5.6474 \pm 1.83$ & 13.22 \\
SAC & $73.6721 \pm 21.01$&    $2.7748 \pm 0.79$   & 26.55 &  $6.8671 \pm 1.47$  &  $0.6590 \pm 0.14$    & 10.42 & $73.4271 \pm 21.92$&   $5.5125 \pm 1.65$   & 13.32\\
TD3 & $8.0623 \pm 3.35$ & $3.1243 \pm 1.30$ & 2.58 & $1.4765 \pm 0.47$ & $0.2828 \pm 0.09$ & 5.22 & $75.2452 \pm 22.42$ & $6.0442 \pm 1.80$ & 12.45 \\
DT & $79.8234 \pm 22.55$ & $2.8512 \pm 0.80$ & 28.04 & $6.9631 \pm 1.95$ & $0.6312 \pm 0.18$ & 11.03 & $78.8762 \pm 23.99$ & $5.9753 \pm 1.83$ & 13.20 \\
DT4Rec & $78.5172 \pm 23.02$ & $2.6688 \pm 0.78$ & 29.42 & $6.8172 \pm 1.82$ & $0.5686 \pm 0.15$ & 11.99 & $76.7871 \pm 22.47$ & $5.4420 \pm 1.59$ & 14.11\\
CDT4Rec & $80.7552 \pm 21.89$ & $3.2150 \pm 0.86$ & 25.41 & $7.3271 \pm 1.78$ & $0.6508 \pm 0.16$ & 11.27 & $79.2101 \pm 22.31$ & $5.6498 \pm 1.59$ & 14.02 \\ 
EDT4Rec & $82.4211 \pm 21.97$ & $3.4385 \pm 0.92$ & 23.97 & $7.5817 \pm 1.78$ & $0.6497 \pm 0.15$ & 11.67 & $79.6651 \pm 20.44$ & $5.6741 \pm 1.46$ & 14.04\\
\hline
\textbf{MDT4Rec} & $\mathbf{83.1846 \pm 23.20}$ & $\mathbf{3.6436 \pm 1.01}$ & 23.02 & $\mathbf{7.7128 \pm 1.62}$ & $\mathbf{0.6614 \pm 0.14}$ & 11.67 & $\mathbf{81.7891 \pm 21.78}$ & $\mathbf{5.9831 \pm 1.59}$ & 13.67 \\
\textbf{MDT4Rec-LM} & $82.4655 \pm 20.88$ & $3.4375 \pm 0.87$ & 23.99 & $7.6322 \pm 1.70$ & $0.6523 \pm 0.15$ & 11.70 & $80.7261 \pm 20.98$ & $5.8160 \pm 1.52$ & 13.88 \\
\textbf{MDT4Rec-Max} & $82.4012 \pm 21.01$ & $3.4320 \pm 0.88$ & 24.01 & $7.5662 \pm 1.70$ & $0.6749 \pm 0.15$ & 11.65 & $79.8721 \pm 21.01$ & $5.7133 \pm 1.50$ & 13.98\\
\hline
\end{tabular}
}
\label{tab:allresult_2}
\end{table*}

In this section, we present experimental results aimed at addressing the following four key research questions:
\begin{itemize}
    \item \textbf{RQ1}: How does MDT4Rec perform relative to existing offline RL-based recommendation approaches and standard reinforcement learning algorithms in both online and offline recommendation settings?
    \item \textbf{RQ2}: What is the impact of using different large language models (LLMs) on the performance of MDT4Rec?
    \item \textbf{RQ3}: What roles do the LoRA modules and the modified MLP layers play in shaping the final performance? What is the contribution of each individual component in the MDT4Rec framework?
    \item \textbf{RQ4}: How do variations in history length \( T \) and step size \( \delta \) influence the model's overall effectiveness?
\end{itemize}

\subsection{Datasets and Environments}
In this section, we evaluate the performance of our proposed algorithm, MDT4Rec, against other state-of-the-art algorithms, employing both real-world datasets and an online simulation environment. We first introduce five diverse, public real-world datasets from various recommendation domains for our offline experiments:
\begin{itemize}
    \item Coat~\cite{schnabel2016recommendations}, is used for product recommendation with 290 users, 300 items, and 7,000 interactions for training, and 290 users, 300 items, and 4,600 interactions for testing.

\item YahooR3~\cite{marlin2009collaborative}, is used for music recommendation with 15,400 users, 1,000 items, and 311,700 interactions for training, and 5,400 users and 54,000 interactions for testing.

\item  MovieLens-1M \footnote{https://grouplens.org/datasets/movielens/1m}: It contains 58,000 users, 3,000 items, and 5,900,000 interactions for training, and 58,000 users and 4,400,000 interactions for testing.

\item KuaiRec~\cite{gao2022kuairec}, is used for video recommendation with 1,000 users, 6,700 items, and 594,400 interactions for training, and 1,000 users and 371,000 interactions for testing.

\item KuaiRand~\cite{gao2022kuairand}, is similar to KuaiRec, but with 571,300 interactions for training, and 1,000 users and 371,000 interactions for testing.
\end{itemize}
All of those mentioned datasets are converted to the RL environment via EasyRL4Rec~\cite{yu2024easyrl4rec}. The overall environment follows the OpenAI Gymnasium to ensure the formats are standardized. We summarise three key components used in this environment for reproducibility. 
\begin{itemize}
    \item State:  A StateTracker module is utilized to create state representations. This module encodes user states by considering historical actions and corresponding feedback. Popular sequential modeling techniques, such as RNNs (GRU), CNN-based models (Caser), and attention-based models (SASRec), are implemented to transform user interaction sequences into state embeddings. 
    \item Action: The action space corresponds to the items that can be recommended. If the continuous action is used, a mechanism is implemented to convert continuous action outputs into discrete item recommendations.
    \item Reward: The rewards are derived directly from user feedback available in the dataset. Depending on the dataset, the reward can be a click, rating, dwell time, or other behavior metrics that represent user engagement.
\end{itemize}
Matrix factorization is used to make predictions for missing interactions. Evaluations are conducted on 100 parallel test environments.

Furthermore, we evaluate our proposed approach using a real-world online simulation platform to assess its effectiveness in an interactive setting. Specifically, we adopt VirtualTB~\cite{shi2019virtual} as the primary testbed for online evaluation. The evaluation metrics used are consistent with those employed in the offline experiments, namely: \( R_{\text{cumu}} \), \( R_{\text{avg}} \), and \textit{Length}. Here, \( R_{\text{cumu}} \) denotes the total accumulated reward over a trajectory, while \( R_{\text{avg}} \) refers to the mean reward per interaction. The \textit{Length} corresponds to the total number of interactions within a single trajectory.

It is worth mentioning that, those offline datasets are converted to RL environments via the EasyRL4Rec~\cite{yu2024easyrl4rec} framework which has a different setting than VirtualTB. Hence, the evaluation scale might be different.

\subsection{Baselines and Experimental Setup}
In our study, we evaluate MDT4Rec within the context of offline RLRS methods. Given that our work is centered on offline RLRS, we have selected the most recent offline RLRS approaches as baselines. Additionally, we have included several well-established RL algorithms to ensure a comprehensive evaluation.
\begin{itemize}
    \item \textbf{Deep Deterministic Policy Gradient (DDPG)}~\cite{lillicrap2015continuous}: An off-policy method designed for environments with continuous action spaces.
    \item \textbf{Soft Actor-Critic (SAC)}~\cite{haarnoja2018soft}: This approach is an off-policy, maximum entropy Deep RL method that focuses on optimizing a stochastic policy.
    \item \textbf{Twin Delayed DDPG (TD3)}~\cite{fujimoto2018addressing}: An enhancement of the baseline DDPG, TD3 improves performance by learning two Q-functions, updating the policy less frequently.
    \item \textbf{DT}~\cite{chen2021decision}: An offline RL algorithm that leverages the transformer architecture to infer actions.
    \item \textbf{DT4Rec}~\cite{zhao2023user}: Building on the standard DT framework, DT4Rec integrates a conservative learning method to better understand users' intentions in offline RL4RS.
    \item \textbf{CDT4Rec}~\cite{Wang_2023}: This model introduces a causal layer to the DT framework, aiming to more effectively capture user intentions and preferences in offline RL4RS.
    \item \textbf{EDT4Rec}~\cite{chen2024maximum}: This model is an extension of the CDT4Rec which considers the stitching problem in DT.
\end{itemize}
It is worth noting that offline RL-based methods that require expert demonstrations for batch training. For the five offline datasets, we use the algorithm that achieves the best performance to serve as the expert policy collector. In contrast, for VirtualTB, where training data access is restricted, a separate RL algorithm is employed to explore and collect trajectories. Following prior works~\cite{Wang_2023,chen2024maximum}, we use DDPG as the behavior policy to gather a certain number of expert trajectories. Although other algorithms are also suitable for serving as the trajectories collector, DDPG is chosen due to its popularity and widespread use in the RLRS doamin~\cite{chen2023deep}. The number of trajectories is intentionally limited to provide the model with a foundational understanding of the environment without extensive exploration. 
Experiments are conducted on a server with two Intel Xeon CPU E5-2697 v2 CPUs with 6 NVIDIA TITAN X Pascal GPUs, 2 NVIDIA TITAN RTX, and 768 GB memory.

\subsection{Overall Comparison (RQ1)}

The results in Table~\ref{tab:allresult_2} show a consistent trend: \textsc{MDT4Rec} achieves the highest cumulative return ($R_{\text{cumu}}$) and per-step reward ($R_{\text{avg}}$) across all datasets. It outperforms both traditional actor–critic methods (DDPG, SAC, TD3) and recent transformer-based offline recommendation baselines (DT, DT4Rec, CDT4Rec, EDT4Rec). This strong performance holds across a diverse set of environments—including explicit-feedback datasets like Coat, implicit-feedback datasets such as KuaiRec and KuaiRand, and large-scale interaction logs like VirtualTB and YahooR3—demonstrating that \textsc{MDT4Rec} generalises well to different catalogue sizes, sparsity levels, and feedback modalities.

The improvement in average reward is particularly relevant for recommender systems, where high per-step reward reflects the model's ability to consistently select user-aligned items, rather than relying on longer trajectories to accumulate reward. The reported trajectory lengths for \textsc{MDT4Rec} remain similar to those of the strongest baselines, indicating that its reward advantage is not due to prolonged interactions but to better decision-making at each step. In practice, this suggests that users would receive more relevant content throughout the session, which is crucial for maintaining engagement.

A key observation is that \textsc{MDT4Rec}'s performance remains stable when transitioning from offline evaluation to online simulation via VirtualTB. Models that perform well on static logs often deteriorate in interactive environments due to exposure to out-of-distribution states. In contrast, \textsc{MDT4Rec} maintains its advantage, indicating that its learned policy is robust and adapts well during online interaction. This stability is likely a result of the architectural design, which combines structured sequence modelling with enhanced representation capacity.

Moreover, \textsc{MDT4Rec} exhibits lower variance across training runs compared to the actor–critic methods. This suggests that its training process is more stable and less sensitive to initialisation or sampling noise, reducing the need for extensive hyperparameter tuning and making the model more reliable in practical deployment.

\begin{table*}[!ht]
\caption{Different LLMs across five offline datasets and online simulator}
\resizebox{\textwidth}{!}{
\begin{tabular}{c|ccc|ccc|ccc}
\hline
\textbf{Method} & \multicolumn{3}{c|}{\textbf{Coat}} & \multicolumn{3}{c|}{\textbf{MoveLens}} & \multicolumn{3}{c}{\textbf{KuaiRec}} \\
 & $R_{cumu}$ & $R_{avg}$ & Length & $R_{cumu}$ & $R_{avg}$ & Length & $R_{cumu}$ & $R_{avg}$ & Length \\
\hline
\hline
 Ours (GPT2 - Large) & $89.8742 \pm 22.48$ & $2.9275 \pm 0.73$ & 30.70 & $38.6401 \pm 10.71$ & $4.2933 \pm 1.20$ & 9.00 & $33.0124 \pm 11.90$ & $1.0726 \pm 0.39$ & 30.78\\
Ours (LLaMA 3.2- 3B) & $89.7278 \pm 21.49$ & $2.8963 \pm 0.69$ & 30.98 & $39.4049 \pm 10.23$ & $4.2878 \pm 1.11$ & 9.19 & $33.2441 \pm 11.67$ & $1.0717 \pm 0.38$ & 31.02 \\
Ours (Qwen 2.5 - 7B) & $90.5261 \pm 22.04$ & $2.9373 \pm 0.72$& 30.82 & $39.7698 \pm 10.34$& $4.2948 \pm 1.12$ & 9.26 & $33.7589 \pm 11.70$ & $1.0741 \pm 0.37$ & 31.43\\
Ours (DeepSeek - R1 - 7B) & $89.8570 \pm 21.77$ & $2.9298 \pm 0.71 $ & 30.67 & $30.9865 \pm 10.66$ & $4.2950 \pm 1.15$ & 9.31 & $33.5084 \pm 11.66$ & $1.0733 \pm 0.37$ & 31.22 \\
\hline
\end{tabular}
}

\resizebox{\textwidth}{!}{
\begin{tabular}{c|ccc|ccc|ccc}

\hline
\textbf{Method} & \multicolumn{3}{c|}{\textbf{YahooR3}} & \multicolumn{3}{c|}{\textbf{KuaiRand}} & \multicolumn{3}{c}{\textbf{VirtualTB}} \\
 & $R_{cumu}$ & $R_{avg}$ & Length & $R_{cumu}$ & $R_{avg}$ & Length & $R_{cumu}$ & $R_{avg}$ & Length \\
\hline
Ours (GPT2 - Large) & $83.1846 \pm 23.20$ & $3.6436 \pm 1.01$ & 23.02 & $7.7128 \pm 1.62$ & $0.6614 \pm 0.14$ & 11.67 & $81.7891 \pm 21.78$ & $5.9831 \pm 1.59$ & 13.67 \\
Ours (LLaMA 3.2- 3B) & $90.7752 \pm 23.34$ & $3.6151 \pm 0.93$  & 25.11 & $7.1884 \pm 1.57$ & $0.6529\pm 0.14$ & 11.02 & $80.9560 \pm 20.88$  & $5.9923 \pm 1.55$ & 13.51\\
Ours (Qwen 2.5- 7B)& $93.5563 \pm 23.80$ & $3.6574 \pm 0.93$ & 25.58 & $7.6805 \pm 1.49$ & $0.6644 \pm 0.13$& 11.56 & $80.5481 \pm 21.03$ & $6.0021 \pm 1.57$ & 13.42\\
Ours (DeepSeek - R1 - 7B) & $91.1495 \pm 23.02$ & $3.6489 \pm 0.92 $ & 24.98 & $7.5635 \pm 1.44$& $0.6623 \pm 0.12$& 11.42 & $83.2638 \pm 21.43$ & $5.9902 \pm 1.54$ & 13.90\\
\hline
\end{tabular}
}
\label{tab:differentllm}
\end{table*}
\subsection{Different LLMs for Weight Initialization (RQ2)}

\label{sec:differentllm}
Table \ref{tab:differentllm} shows how four open-weight language models influence MDT4Rec when their transformer blocks serve as the initial policy backbone. In every offline dataset and in the VirtualTB simulator the overall ranking stays the same: the system trained from each LLM reaches a high cumulative return and a high per-step reward, which implies that the subsequent fine-tuning on recommendation trajectories adapts each backbone well to user–item sequences.

Some qualitative differences still emerge. The two seven-billion-parameter models, Qwen 2.5 and DeepSeek-R1, yield the strongest scores on most datasets. Their larger capacity likely supplies richer contextual embeddings at the start of fine-tuning, leading to slightly sharper state and action representations after training. 

DeepSeek-R1 performs especially well in the online simulator. Unlike offline logs, the simulator produces novel states that differ from the training data. DeepSeek-R1’s conversational pre-training may help it form more flexible latent features, allowing the policy to handle unseen interaction patterns with less drift.

GPT-2 Large and LLaMA 3.2-3B remain close behind the larger backbones. Because GPT-2 Large has a smaller footprint, it runs faster during fine-tuning and inference, which can be attractive in practical deployments where memory or latency are tight constraints. LLaMA 3.2-3B shows a similar trade-off between resource demand and final reward.

In summary, MDT4Rec benefits from larger or domain-aligned language models, yet its performance does not collapse when a lighter or more general backbone is used. This robustness lets practitioners pick an LLM that matches licence terms and hardware limits while retaining strong recommendation quality.

\subsection{Fine-Tuning Strategies and Embedding (RQ3)}

Recent works~\cite{hansen2022pre,ze2023visual} show that full finetuning representations for visual RL tasks is better than adopting the frozen pre-trained models. While some works~\cite{biderman2024lora,sun2023comparative} show that LoRA could outperform frozen and fully finetuned models.  In this section, we investigate the effectiveness of two key architectural and training choices in \textsc{MDT4Rec}: (i) the fine-tuning strategy for the pre-trained language model (LM) backbone, and (ii) the design of the input embedding layer. To ensure computational efficiency, we perform all experiments on the VirtualTB online simulation platform. GPT-2 is used as the language model for initialization in all variants.

We first examine three adaptation strategies for the LM backbone: full finetuning, frozen parameters, and low-rank adaptation via LoRA. The results are summarized in Table~\ref{tab:lora}.

\begin{table}[h]
    \centering
    \caption{Comparing LoRA with full finetuning and frozen parameters on VirtualTB.}
    \begin{tabular}{c|ccc}
        \hline
         \textbf{Method} &  $R_{\text{cumu}}$ & $R_{\text{avg}}$ & \textbf{Length} \\ \hline
         LoRA & $81.7891 \pm 21.78$ & $5.9831 \pm 1.59$ & 13.67 \\
         Frozen & $80.1210 \pm 21.44$ & $5.9305 \pm 1.73$ & 13.51\\
         Full finetuning & $77.8939 \pm 21.10 $ & $5.8435 \pm 1.66$ & 13.33\\ \hline 
    \end{tabular}
    \label{tab:lora}
\end{table}

LoRA achieves the highest cumulative reward ($R_{\text{cumu}}$) and average reward ($R_{\text{avg}}$), while maintaining a similar trajectory length to the other two variants. This indicates that LoRA enables the model to improve the quality of recommendations without relying on extended user sessions. Although the differences across methods are not large, the consistent edge of LoRA suggests that low-rank adaptation offers a practical balance between flexibility and stability. Unlike full finetuning, which may introduce instability due to overfitting or catastrophic forgetting, and unlike the frozen backbone, which restricts adaptability, LoRA allows for targeted updates that preserve useful pre-trained structure while supporting task-specific refinement.

Next, we study the impact of the embedding module used to project raw state features into the transformer’s hidden space. In \textsc{MDT4Rec}, we use a two-layer multilayer perceptron (MLP) by default. To assess its contribution, we compare it with a simpler linear projection. The results are shown in Table~\ref{tab:mlp}.

The MLP embedding clearly outperforms the linear projection on both reward metrics. Since the state representations in recommender systems often involve heterogeneous signals—such as discrete IDs, numerical statistics, and textual embeddings—a simple linear layer may be insufficient to model the complex dependencies among these inputs. The MLP introduces non-linearities that help capture higher-order interactions, enabling the downstream transformer to make more accurate predictions. The small increase in trajectory length with MLP is not large enough to account for the reward difference, suggesting the gain primarily comes from improved representation quality.

These findings highlight the contribution of both components. LoRA provides an efficient mechanism to adapt large pre-trained models without full parameter updates, and the MLP embedding layer enhances the expressiveness of state encoding. Removing either component leads to a decline in performance, showing that each plays a distinct and complementary role in enabling \textsc{MDT4Rec} to produce high-quality recommendations in interactive environments.

\begin{table}[h]
    \centering
    \caption{Comparing MLP embedding with linear projection on VirtualTB.}
    \begin{tabular}{c|ccc}
        \hline
         \textbf{Method} &  $R_{\text{cumu}}$ & $R_{\text{avg}}$ & \textbf{Length} \\ \hline
         MLP & $81.7891 \pm 21.78$ & $5.9831 \pm 1.59$ & 13.67\\
         Linear & $76.4639 \pm 23.41 $ &  $5.8728 \pm 1.80$ & 13.02 \\ \hline
    \end{tabular}
    \label{tab:mlp}
\end{table}
\subsection{Ablation Study (RQ3)}
\label{sec:aba}
Table~\ref{tab:allresult_2} provides a detailed comparison between \textsc{MDT4Rec} and two ablated variants: \textsc{MDT4Rec-LM}, which removes the language model (LM) prior, and \textsc{MDT4Rec-Max}, which removes the maximum in-support return modeling.

Comparing \textsc{MDT4Rec} with \textsc{MDT4Rec-LM} shows a slight reduction in both cumulative return ($R_{\text{cumu}}$) and average reward ($R_{\text{avg}}$) when the LM prior is removed. This indicates that initializing with a pretrained language model supports better generalization and knowledge transfer, likely by providing richer contextual representations for user-item sequences and historical behaviors. As a result, the policy becomes more adaptive and consistently achieves higher rewards.

We further examine the role of language-level supervision by incorporating a weighted language modeling loss during training. As shown in Figure~\ref{fig:language}, this auxiliary loss helps prevent significant performance degradation during the reinforcement learning stage and reduces training variance. These findings suggest that language prediction acts as an effective regularization mechanism, helping to preserve the representational strength of the pretrained model while facilitating learning in a data-limited decision-making context.

The performance gap between \textsc{MDT4Rec} and \textsc{MDT4Rec-Max} highlights the importance of dynamically selecting effective history lengths. Without maximum in-support return modeling, the model lacks the ability to identify and discard low-reward subsequences during training, leading to weaker outcomes. In contrast, \textsc{MDT4Rec} can adaptively retain high-quality trajectory segments and suppress noisy or uninformative ones, which is especially beneficial when training on sub-optimal offline data.

\textsc{MDT4Rec} integrates both components—the LM prior and maximum in-support return modeling—and consistently achieves higher $R_{\text{cumu}}$ and $R_{\text{avg}}$ across all evaluated datasets. This combination suggests that leveraging pretrained knowledge and trajectory-level selection jointly strengthens the policy and improves robustness under sparse or noisy feedback conditions.
\begin{figure*}[!ht]
     \centering
     \begin{subfigure}[b]{0.33\linewidth}
         \centering
           \includegraphics[width=\linewidth]{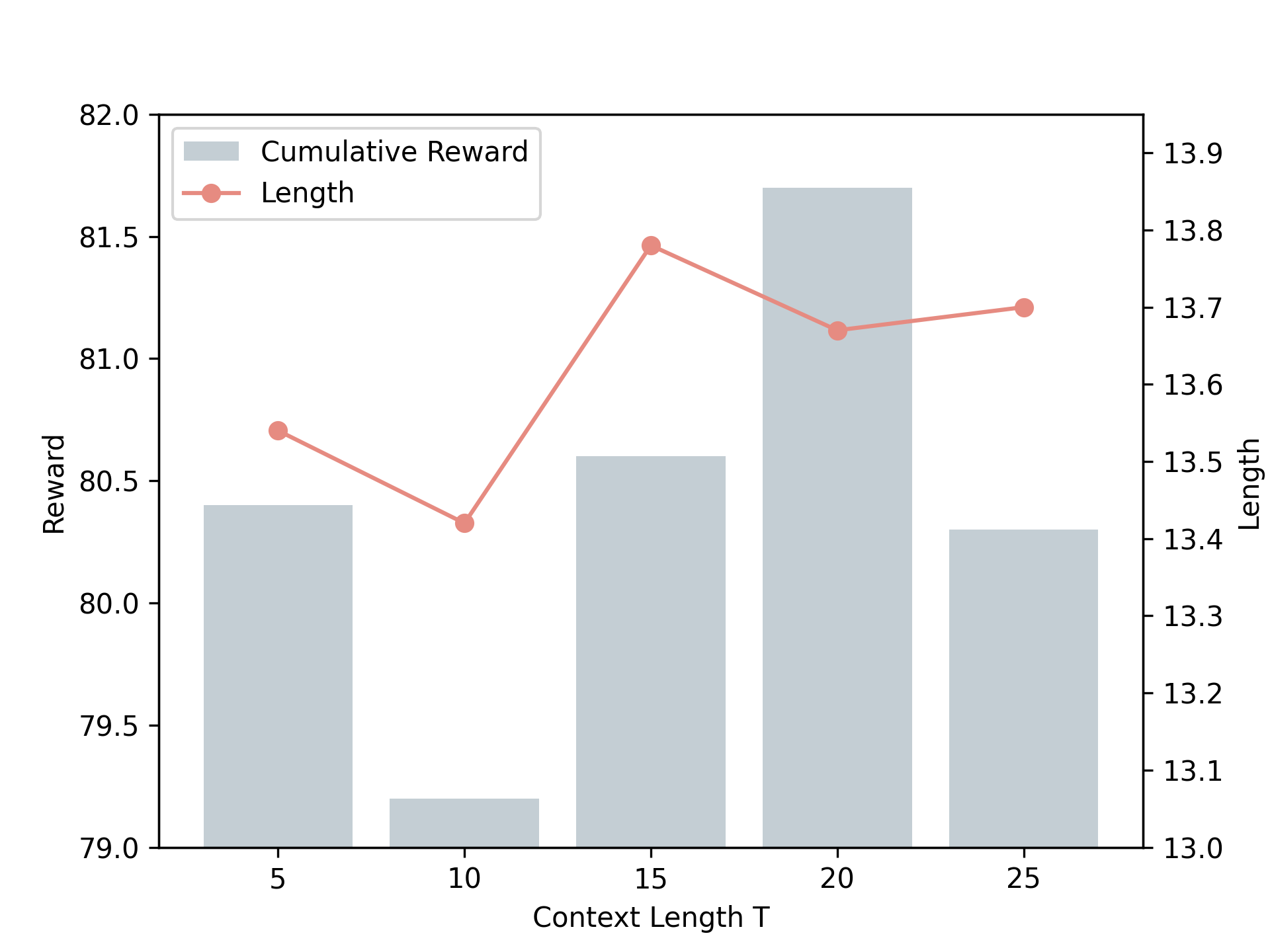}
           \caption{}
           \label{fig:hyper}
     \end{subfigure}
     \begin{subfigure}[b]{0.33\linewidth}
         \centering
         \includegraphics[width=\linewidth]{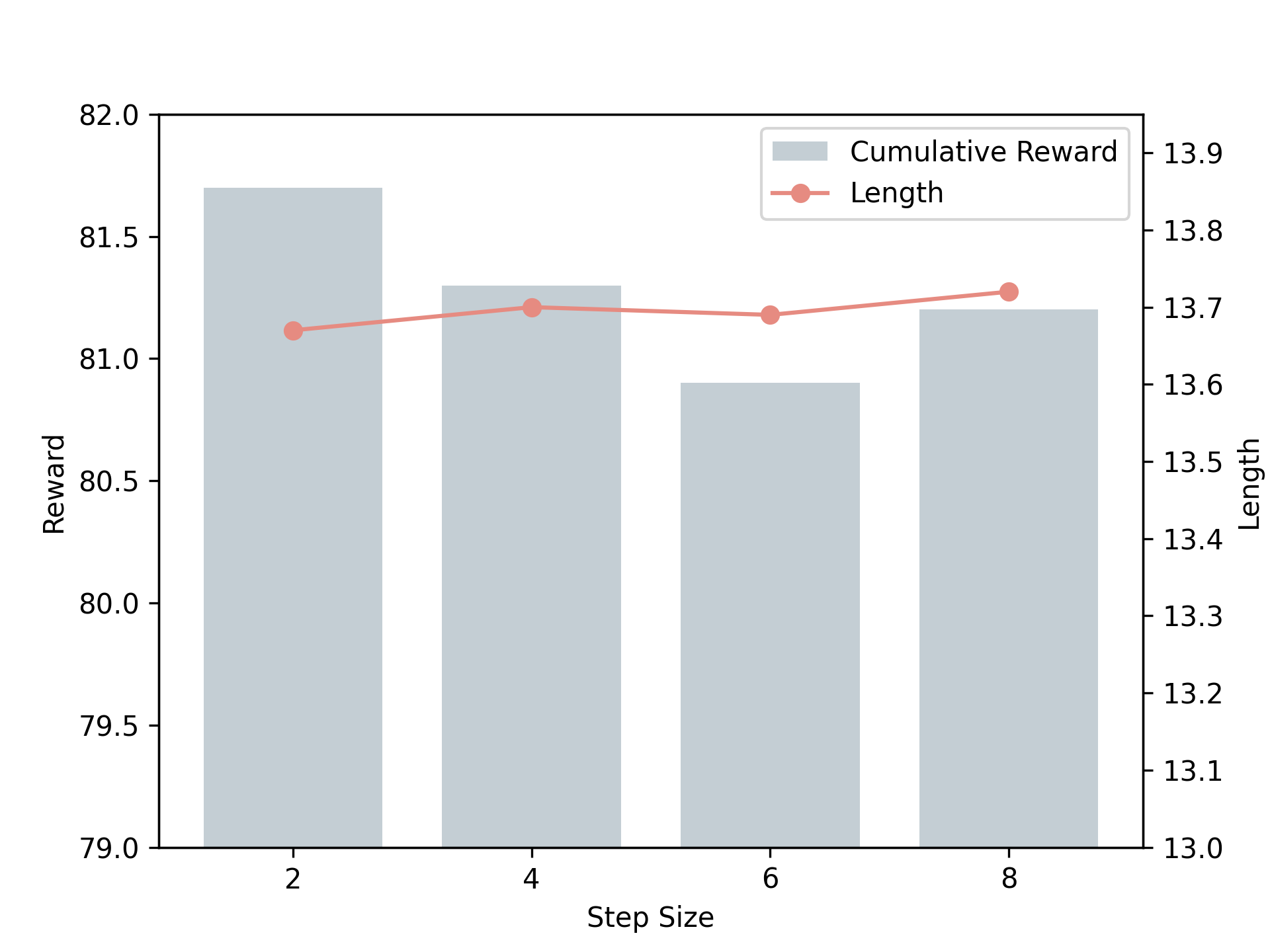}
         \caption{}
         \label{fig:delta}
     \end{subfigure}
     \begin{subfigure}[b]{0.33\linewidth}
         \centering
         \includegraphics[width=\linewidth]{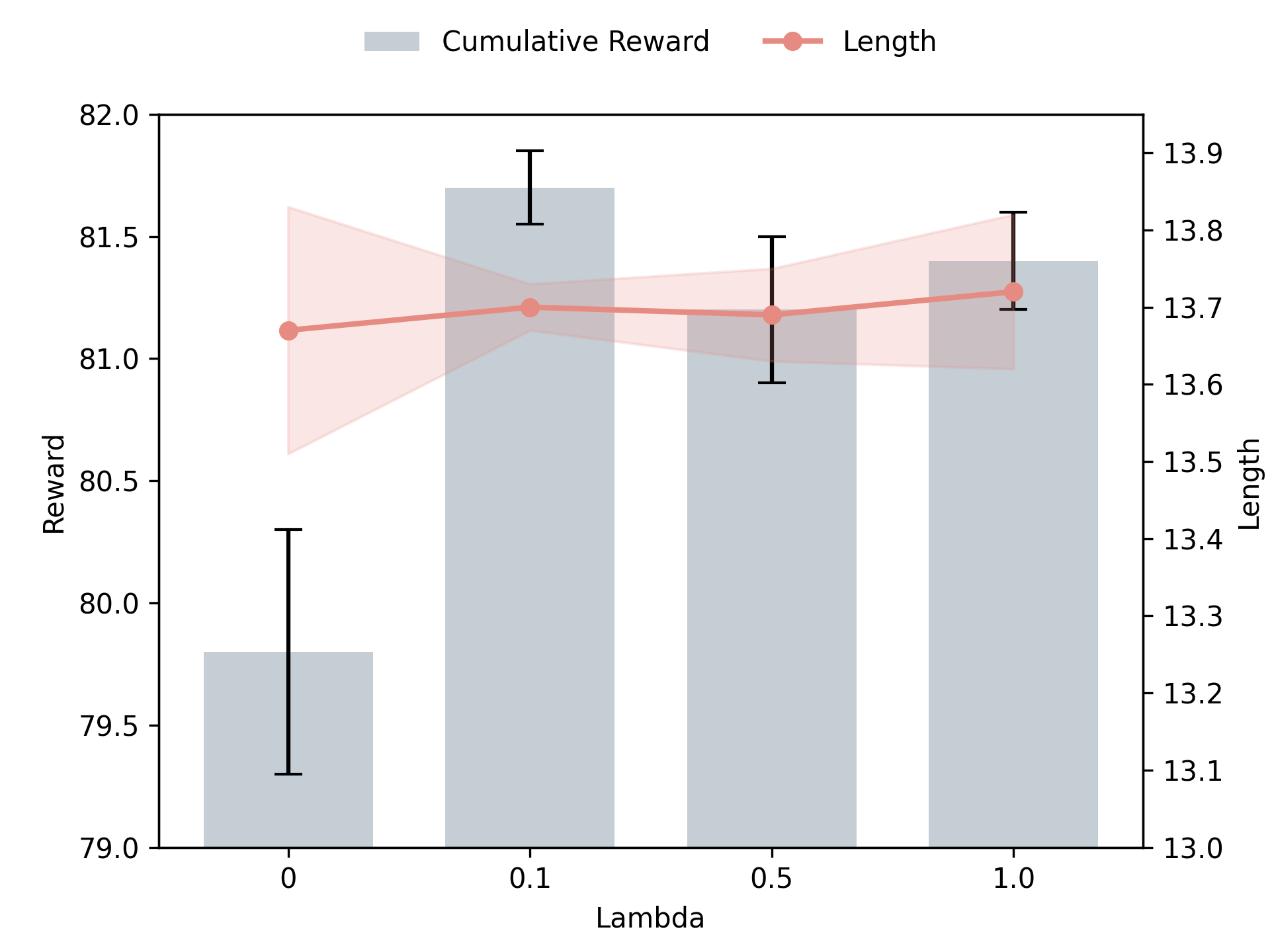}
         \caption{}
         \label{fig:language}
     \end{subfigure}
        \caption{(a). History Length study. (b). Step Size $\delta$ Study. (c). Language Auxiliary Loss Study}
\end{figure*}
\subsection{Hyperparameter Study (RQ4)}
\label{sec:hyper}
In this section, we analyze the effects of key hyperparameters in \textsc{MDT4Rec}, focusing on the context length \( T \) and the step size \( \delta \).

We first examine the impact of context length \( T \), selected from the range \(\{5, 10, 15, 20, 25\}\). The results are presented in Figure~\ref{fig:hyper}. Since \( R_{\text{avg}} \) is computed as the ratio of \( R_{\text{cumu}} \) to sequence length, we report only \( R_{\text{cumu}} \) and the trajectory length for clarity. As shown in the figure, \textsc{MDT4Rec} achieves the highest cumulative reward at \( T = 20 \), while maintaining a relatively short sequence length. This combination indicates that \( T = 20 \) also yields the highest average reward, making it the most effective context length in terms of both return and efficiency. We therefore adopt \( T = 20 \) as the default setting in our experiments.

Next, we study the effect of step size \( \delta \), with values selected from \(\{2, 4, 6, 8\}\), while fixing \( T = 20 \). The corresponding results are shown in Figure~\ref{fig:delta}. We observe that \( R_{\text{cumu}} \) fluctuates across different step sizes: it peaks at \( \delta = 2 \), drops at \( \delta = 6 \), and slightly recovers at \( \delta = 8 \). The sequence length remains relatively stable across configurations, with a slight increase at larger step sizes. Considering both cumulative reward and sequence length, the highest average reward is obtained when \( \delta = 2 \), indicating this setting best balances return quality and efficiency.

\section{Related Work}
\subsection{Offline Reinforcement Learning based Recommendation}
Recent studies have started exploring the prospect of integrating offline RLRS~\cite{chen2023opportunities}. Notably,~\citet{Wang_2023,wang2024retentive} introduced a novel model called the Causal Decision Transformer for RS. This model incorporates a causal mechanism designed to estimate the reward function, offering promising insights into the offline RL-RS synergy. As an extension,~\citet{chen2024maximum} further explore the stitching capability of DT when dealing with the sub-optimal trajectories. Similarly,~\citet{zhao2023user} introduced the DT4Rec, utilizing the vanilla Decision Transformer as its core architecture to provide recommendations, demonstrating its potential in the field.~\cite{wang2025policy} propose a causal approach to enhance the offline RLRS by decomposing the state space.~\cite{zhang2025darlr} design a reward model method for the Model-based offline RLRS by incroproting the world model.
\subsection{Large Language Model in Reinforcement Learning}
The remarkable performance of LLMs in natural language processing tasks has inspired researchers to investigate their applicability in decision-making challenges~\cite{ahn2022can,huang2022language}. In~\cite{ahn2022can,huang2022language}, LLMs are applied for task decomposition and planning at a high level, while policies for lower-level execution are either learned or separately designed. Another line of research~\cite{li2022pre,lin2023text2motion,wang2023prompt} focuses on leveraging the representational and generalization capabilities of pre-trained LLMs. For instance,~\citet{li2022pre} fine-tune pre-trained LMs to generate policies for tasks where inputs can be transformed into word sequences, emphasizing the importance of input sequence structure;~\citet{lin2023text2motion} employ a geometric feasibility planner to guide the LM in producing both mid- and low-level plans based on language instructions, while~\citet{tang2023saytap} design prompts for encoding instructions into LMs.~\cite{reid2022can,shi2023unleashing} investigate the use of wikitext to enhance traditional offline RL tasks.

\section{Conclusion}
In this work, we introduced MDT4Rec, an offline reinforcement learning-based recommender system designed to address two key challenges: learning from sub-optimal user feedback and representing complex user-item interactions. By shifting trajectory stitching to the action inference stage, MDT4Rec allows for adaptive history truncation, ensuring that negative past experiences do not negatively impact decision-making. Additionally, by leveraging pre-trained large language models for initialization, replacing linear embedding layers with MLPs, and employing LoRA for efficient fine-tuning, MDT4Rec effectively enhances representation learning while maintaining computational efficiency. Extensive evaluations on public datasets and online simulations demonstrate that MDT4Rec consistently outperforms existing offline RL-based recommendation models. 

However, in its current form, MDT4Rec retains the language capabilities of LLMs but does not fully exploit them for recommendation tasks, partly due to the limited availability of relevant textual or user descriptions across datasets. A promising future direction is to explore how the language reasoning abilities of LLMs can be better leveraged to enhance offline RLRS performance. We believe that this study lays the groundwork for future research on effectively integrating LLMs within RLRS.
 
\section*{GenAI Usage Disclosure}
In accordance with the ACM authorship policy, no disclosure of generative AI usage is required for this work.
\bibliographystyle{ACM-Reference-Format}
\balance
\bibliography{sample}

\end{document}